\documentclass{lmcs}

\usepackage{enumerate}
\usepackage[colorlinks=true]{hyperref}
\usepackage{tikz}
\usepackage{amssymb}

\usepackage{xcolor,latexsym,amsmath,extarrows,alltt}
\usepackage{xspace}
\usepackage{booktabs}
\usepackage{mathtools}
\usepackage{enumitem}
\usepackage{stmaryrd}
\usepackage{microtype}
\usepackage{wrapfig}

\theoremstyle{theorem}\newtheorem{theorem}{Theorem}
\theoremstyle{theorem}\newtheorem{lemma}[theorem]{Lemma}
\theoremstyle{theorem}\newtheorem{corollary}[theorem]{Corollary}
\theoremstyle{definition}\newtheorem{definition}[theorem]{Definition}
\theoremstyle{definition}\newtheorem{algorithm}[theorem]{Algorithm}

\newcommand{\A}{\mathcal{A}}
\newcommand{\B}{\mathcal{B}}
\newcommand{\C}{\mathcal{C}}
\newcommand{\D}{\mathcal{D}}
\newcommand{\F}{\mathcal{F}}
\newcommand{\G}{\mathcal{F}^\bullet}
\renewcommand{\H}{\mathcal{F}^\bullet_\bot}
\newcommand{\N}{\mathbb{N}}
\newcommand{\V}{\mathcal{V}}
\newcommand{\NF}{\mathsf{NF}}
\newcommand{\Terms}{\mathcal{T}}
\newcommand{\Rules}{\mathcal{R}}
\newcommand{\Pules}{\mathcal{R}^\bullet}
\newcommand{\Qules}{\mathcal{R}^\bullet_\bot}

\newcommand{\DV}{\mathsf{DV}}
\newcommand{\Conf}{\mathsf{Confirmed}}

\newcommand{\arr}[1]{\to_{#1}}
\newcommand{\arrr}[1]{\arr{#1}^*}
\newcommand{\arrin}[1]{\leadsto_{#1}}
\newcommand{\arrrin}[1]{\arrin{#1}^*}
\newcommand{\supterm}{\rhd}
\newcommand{\suptermeq}{\unrhd}

\newcommand{\symb}[1]{\mathtt{#1}}

\begin{document}

\thispagestyle{empty}
\definecolor{comment}{rgb}{0.92, 0.92, 0.92}
\colorbox{comment}{\parbox{13cm}{
  This extended abstract was presented at DICE 2016.  A longer version
  of this work containing complete proofs is available at:
  \begin{center}
  \url{https://arxiv.org/pdf/1711.03399v3.pdf}
  \end{center}
}}
\setcounter{page}{0}

\newpage

\title{On First-order Cons-free Term Rewriting and PTIME}
\thanks{$\star$ Supported by the Marie Sk{\l}odowska-Curie
action ``HORIP'', program H2020-MSCA-IF-2014, 658162.}
\author[C.~Kop]{Cynthia Kop} 
\address{Department of Computer Science, Copenhagen University
\hfill \emph{e-mail address}: kop@di.ku.dk}
\maketitle

\begin{abstract}
In this paper, we prove that (first-order) cons-free term rewriting
with a call-by-value reduction strategy exactly characterises the
class of PTIME-computable functions.  We use this to give an
alternative proof of the result by Carvalho and Simonsen which states
that cons-free term rewriting with linearity constraints
characterises this class.
\end{abstract}

\section{Introduction}

In~\cite{jon:01}, Jones introduces the notion of \emph{cons-free
programming}: working with a small functional programming language,
cons-free programs are defined to be \emph{read-only}: recursive data
cannot be created or altered, only read from the input.  By imposing
further restrictions on data order and recursion style, classes of
cons-free programs turn out to characterise various deterministic
classes in the time and space hierarchies of computational complexity.

Rather than using an artificial 
language, it would make
sense to consider \emph{term rewriting}. 
The authors of~\cite{car:sim:14} explore a first definition of
cons-free first-order term rewriting, and prove that
this exactly characterises PTIME, provided a partial linearity
restriction is imposed.  This restriction is necessary since,
without it, we can implement exponential algorithms in a cons-free
system\,\cite{kop:sim:16}.
However, the restriction is not common, and the proof is
intricate.

In this paper, we provide an alternative, simpler proof of this
result.  We do so by giving some simple syntactical transformations
which allow a call-by-value reduction strategy to be imposed, and
show that call-by-value cons-free first-order term rewriting
characterises PTIME.  This incidentally gives a new result with
respect to call-by-value cons-free term rewriting, as well as a
simplification of the linearity restriction in~\cite{car:sim:14}.

\section{Cons-free Term Rewriting}

We assume the basic notions of first-order term rewriting to be
understood.  We particularly assume that the set of rules $\Rules$ is
finite, and split the signature $\F$ into $\D \cup \C$ of defined
symbols ($\D$) and constructors ($\C$).  
$\Terms(\F,\V)$ denotes the set of terms built from symbols in
$\F$ and variables, and $\Terms(\F)$ the set of \emph{ground} terms
over $\F$.  Elements of $\Terms(\C)$ (ground constructor terms) are
called \emph{data terms}.  The \emph{call-by-value} reduction
relation is $\arrin{\Rules}\:\subseteq\:\arr{\Rules}$ where a term $s$
may only be reduced at position $p$ if $s|_p$ has the form $f(s_1,
\dots,s_n)$ with all $s_i$ data terms.  The subterm relation is
denoted $\suptermeq$, or $\supterm$ for strict subterms.

Like Jones~\cite{jon:01}, we will limit interest to \emph{cons-free}
rules.  To start, we must define what this means in the setting of
term rewriting.

\begin{definition}[Cons-free Rules] (\cite{car:sim:14})
A set of rules $\Rules$ is \emph{cons-free} if for all $\ell \to r
\in \Rules$:
\begin{itemize}
\item $\ell$ is linear (so no variable occurs more than once);
\item $\ell$ has the form $f(\ell_1,\dots,\ell_n)$ with all
  $\ell_i$ constructor terms (including variables);
\item if $r \suptermeq t$ where $t = c(r_1,\dots,r_m)$ with $c
  \in \C$, then either $t \in \Terms(\C)$ or $\ell \supterm t$.
\end{itemize}
\end{definition}

So $\Rules$ is a left-linear constructor system whose rules
introduce no new constructors (besides fixed data).
Cons-free term rewriting enjoys many convenient properties.  Most
importantly, the set of data terms that may be reduced to is limited
by the data terms in the start term and the right-hand sides of rules,
as described by the following definition.

\begin{definition}
For a given ground term $s$, the set $\B_s$ contains all data terms
$t$ which occur as (a) a subterm of $s$, or (b) a subterm of the
right-hand side of some rule in $\Rules$.
\end{definition}

$\B_s$ is closed under subterms and, since $\Rules$ is fixed, has
linear size in the size of $s$.
We will see that cons-free reduction, when starting with a term of the
right shape, preserves 
\emph{$\B$-safety}, which
limits the constructors that may occur at any position in a term:

\begin{definition}[$\B$-safety]
Given a set $\B$ of data terms which is closed under subterms, and
which contains all data terms occurring in a right-hand side of
$\Rules$:
\begin{enumerate}
\item\label{bsafe:base} any term in $\B$ is $\B$-safe;
\item\label{bsafe:recurse} if $f \in \D$ has arity $n$ and $s_1,\dots,
  s_n$ are $\B$-safe, then $f(s_1,\dots,s_n)$ is $\B$-safe.
\end{enumerate}
\end{definition}

For cons-free $\Rules$, it is not hard to obtain the following
property:

\begin{lemma}\label{lem:Bsafeprop}
Let $\Rules$ be cons-free.
For all $s,t$:
if $s$ is $\B$-safe and $s \arrr{\Rules} t$, then $t$ is $\B$-safe.
\end{lemma}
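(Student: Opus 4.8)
\emph{Proof proposal.}
The plan is to reduce the statement to a single rewrite step and then argue by induction on the position of the contracted redex; cons-freeness will only be needed in the case where the redex sits at the root. Since $\arrr{\Rules}$ is the reflexive--transitive closure of $\arr{\Rules}$, an induction on the length of the reduction sequence reduces the claim to the single-step version: \emph{if $s$ is $\B$-safe and $s \arr{\Rules} t$, then $t$ is $\B$-safe}. Before attacking that I would record an auxiliary fact, to be used twice, namely that every subterm of a $\B$-safe term is again $\B$-safe. This is proved by induction on the definition of $\B$-safety: in case~(\ref{bsafe:base}) it holds because $\B$ is closed under subterms, and in case~(\ref{bsafe:recurse}) the proper subterms of $f(s_1,\dots,s_n)$ are exactly the subterms of the $s_i$, each of which is $\B$-safe by the induction hypothesis.

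For the single-step claim, suppose $s$ is $\B$-safe and $s \arr{\Rules} t$ at position $p$, using a rule $\ell \to r \in \Rules$ and a substitution $\sigma$, so that $s|_p = \ell\sigma$ and $t = s[r\sigma]_p$. Because $s$ is reducible it is not a data term (data terms, being ground constructor terms, are normal forms), so $\B$-safety forces $s = g(s_1,\dots,s_m)$ with $g \in \D$ and every $s_i$ $\B$-safe. If $p \neq \epsilon$, say $p = iq$, then $s_i \arr{\Rules} s_i'$ at position $q$; the induction hypothesis applied to the proper subterm $s_i$ gives that $s_i'$ is $\B$-safe, and case~(\ref{bsafe:recurse}) rebuilds $t = g(s_1,\dots,s_i',\dots,s_m)$. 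This leaves the root case $p = \epsilon$, where $s = \ell\sigma$ and $t = r\sigma$, and I must show that $r\sigma$ is $\B$-safe.

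For the root case, write $\ell = f(\ell_1,\dots,\ell_n)$ with all $\ell_i$ constructor terms. Then $s = f(\ell_1\sigma,\dots,\ell_n\sigma)$ is $\B$-safe with $f \in \D$, so each $\ell_i\sigma$ is $\B$-safe, and hence, by the auxiliary fact, $\sigma(x)$ is $\B$-safe for every variable $x$ occurring in $\ell$ --- and therefore for every variable of $r$, since there are no extra variables on right-hand sides, i.e.\ $\Var(r) \subseteq \Var(\ell)$. I then show, by structural induction on a subterm $u$ of $r$, that $u\sigma$ is $\B$-safe. If $u$ is a variable this is the previous sentence. If $u = g(u_1,\dots,u_k)$ with $g \in \D$, each $u_i\sigma$ is $\B$-safe by the inner induction hypothesis and case~(\ref{bsafe:recurse}) applies. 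If $u = c(u_1,\dots,u_k)$ with $c \in \C$, the third clause of cons-freeness, applied to $r \suptermeq u$, yields either $u \in \Terms(\C)$ or $\ell \supterm u$: in the first case $u$ is a ground data subterm of a right-hand side, so $u\sigma = u \in \B$; in the second case $u\sigma$ is a subterm of $\ell\sigma = s$, hence $\B$-safe by the auxiliary fact. Taking $u = r$ settles the root case, and with it the lemma.

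I expect the one genuine obstacle to be this constructor-rooted sub-case of the inner induction, which is the only point where the cons-free restriction is used: one has to observe that the dichotomy ``$u$ is fixed data, hence already in $\B$'' versus ``$u$ occurs in the left-hand side, hence $u\sigma$ is a subterm of the $\B$-safe term $s$'' is exactly what stops instantiation by $\sigma$ from placing a constructor outside $\B$ at a forbidden position. Everything else is routine manipulation of the two clauses defining $\B$-safety.
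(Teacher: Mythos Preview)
Your proof is correct and is exactly the natural argument one would expect here; the paper itself omits the proof entirely, remarking only that the property is ``not hard to obtain''. Your decomposition---reduce to a single step, peel off non-root positions by structural induction, and in the root case do an inner induction on subterms of $r$ using the cons-free dichotomy---is the standard way to make this precise, and your identification of the constructor-rooted sub-case as the one place where cons-freeness does real work is spot on.
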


Thus, for a decision problem $\symb{start}(s_1,\dots,s_n)
\arrr{\Rules} t$ or $\symb{start}(s_1,\dots,s_n) \arrin{\Rules} t$
(where $t$ and all $s_i$ are data terms), all terms occurring in the
reduction are $\B$-safe.  This insight allows us to limit interest to
$\B$-safe terms in most cases, and is instrumental in the following.

\section{Call-by-value Cons-free Rewriting Characterises PTIME}

For our first result -- which will serve as a basis for our
simplification of the proof in~\cite{car:sim:14} -- we will see that
any decision problem in PTIME can be accepted by a cons-free TRSs with
call-by-value reduction, and vice versa.
First, we define what \emph{accepting} means for a TRS.


\begin{definition}
A decision problem is a set $A \subseteq \{0,1\}^*$.

A TRS $(\F,\Rules)$ with nullary constructors $\symb{true},
\symb{false},\symb{0},\symb{1}$ and $\symb{nil}$, a binary
constructor $::$ (denoted infix) and a unary defined symbol
$\symb{start}$ \emph{accepts $A$} if for all $s = s_1\dots s_n \in
\{0,1\}^*$: $s \in A$ if and only if $\symb{start}(s_1::\dots::s_n::
\symb{nil}) \arrr{\Rules} \symb{true}$.
Similarly, such a TRS \emph{accepts $A$ by call-by-value reduction}
if: $s \in A$ if and only if
$\symb{start}(s_1::\dots::s_n::\symb{nil}) \arrrin{\Rules}
\symb{true}$.
\end{definition}

It is not required that \emph{all} evaluations end in $\symb{true}$,
just that there is such an evaluation -- and that there is not if $s
\notin A$.  This is important as TRSs are not required to be
deterministic.  We say that a TRS \emph{decides $A$} if it accepts
$A$ and is moreover deterministic.
This also corresponds to the notion for (non-deterministic) Turing
Machines.
We claim:

\begin{lemma}\label{lem:decision}
If a decision problem $A$ is in PTIME,
then there exists a
cons-free TRS which decides $A$ by call-by-value reduction.
\end{lemma}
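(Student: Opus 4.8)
The plan is to simulate a polynomially clocked deterministic Turing machine by a cons-free TRS; this is, in essence, Jones' characterisation of PTIME by cons-free first-order functional programs recast as term rewriting. Note first that ``cons-free TRS deciding $A$'' imposes only a syntactic constraint together with determinism and correctness -- no time bound on the TRS itself -- so the simulation need not be efficient (in particular no memoisation is needed), which simplifies matters. Fix a deterministic, one-way-infinite-tape Turing machine $M$ deciding $A$ with its halting states made self-looping, and fix $k$ with the property that on every input of length $n$ the machine halts within $n^k$ steps (absorb the leading constant by increasing $k$; the finitely many remaining inputs -- in particular the empty one -- are handled by explicit rules for $\symb{start}$ matching the literal short list). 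The obstruction to watch throughout is the third clause of cons-freeness: a right-hand side may contain a constructor application only if it is fixed ground data or already occurs in the left-hand side, so we may never \emph{build} a data structure; the usual representations -- a packed configuration, a single object for a number below $n^k$, a pair bundling two results -- are therefore all forbidden, and Lemma~\ref{lem:Bsafeprop} must do the rest.

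For counting I represent a number $j$ with $0 \le j \le n$ by the length-$(n-j)$ suffix of the input list $s_1::\dots::s_n::\symb{nil}$, a ``pointer''; every such suffix is a constructor subterm of the input, so pointer manipulation creates no data. A number below $n^k$ is then a $k$-tuple of pointers (base $n+1$), but I never form a tuple term: each defined symbol that conceptually consumes such a number instead takes $k$ pointer arguments, and whenever a tuple operation's value is needed I supply, for every component index $1\le c\le k$, a defined symbol returning just that component, so ``call $g$ on $\vec{t} - 1$'' is written $g(\symb{dec}_1(\vec{t}),\dots,\symb{dec}_k(\vec{t}))$, and likewise for increment. The single-pointer primitives are: successor (the tail, a subterm), predecessor (scan the input list -- threaded into every defined symbol as an extra argument -- forward until the tail of the current pointer equals the given one), and equality and the zero test (structural recursion on data, the latter against the threaded input); tuple increment and decrement are the evident carry and borrow over these, with saturating defaults at the extremes.

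For the simulation I add fresh nullary constructors for the finitely many tape symbols and control states of $M$ -- fixed data, so cons-free to use on right-hand sides -- and define, by mutual recursion on a time argument $\vec{t}$ (a $k$-tuple of pointers), the symbols $\symb{state}(\vec{t})$ (the state after $\vec{t}$ steps), $\symb{headhere}(\vec{t},\vec{x})$ (whether the head is on cell $\vec{x}$ after $\vec{t}$ steps, a boolean), $\symb{sym}(\vec{t},\vec{x})$ (the content of cell $\vec{x}$ then), and $\symb{read}(\vec{t})$ (the currently scanned symbol), each obtained from the values at time $\vec{t} - 1$ in the expected way: $\symb{sym}$ of a cell is unchanged unless $\symb{headhere}$ held there at $\vec{t}-1$, in which case it is what $M$ writes given $\symb{state}(\vec{t}-1)$ and $\symb{read}(\vec{t}-1)$; $\symb{headhere}(\vec{t},\vec{x})$ holds iff at $\vec{t}-1$ the head was on $\vec{x}-1$ and moved right, or on $\vec{x}+1$ and moved left, each condition read off $\symb{state}(\vec{t}-1)$ and the symbol that neighbour then held; $\symb{state}$ updates from $\symb{state}(\vec{t}-1)$ and $\symb{read}(\vec{t}-1)$; and $\symb{read}(\vec{t})$ is a helper $\symb{scan}(\vec{t},\vec{x})$ walking $\vec{x} = 0,1,2,\dots$ and returning $\symb{sym}(\vec{t},\vec{x})$ at the first $\vec{x}$ with $\symb{headhere}(\vec{t},\vec{x})$. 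Routing everything through $\symb{headhere}$ and $\symb{scan}$ is exactly what avoids ever \emph{returning} a number below $n^k$ (the head position): every defined symbol returns a single pointer or a single fixed nullary constructor. Finally $\symb{start}(x)$ threads $x$ everywhere as the input, takes for its clock the $k$-tuple all of whose components are $\symb{nil}$ (denoting $(n+1)^k - 1 \ge n^k$, so $M$ has halted by then), and rewrites to $\symb{true}$ or $\symb{false}$ according to whether $\symb{state}$ at the clock value is accepting.

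Three things then need checking. Cons-freeness: every rule is left-linear with left-hand side $f(\ell_1,\dots,\ell_n)$ and the $\ell_i$ constructor patterns (pointer patterns are nested $::$ and $\symb{nil}$), and every constructor on a right-hand side is a fixed nullary one or a pointer already present in the pattern; so Lemma~\ref{lem:Bsafeprop} gives that every reduct of $\symb{start}(s_1::\dots::s_n::\symb{nil})$ is $\B_s$-safe. Determinism: the rules are pairwise non-overlapping and exhaustive by construction and $M$ is deterministic, so the TRS is deterministic and hence \emph{decides} $A$ rather than merely accepting it. Termination to the answer and call-by-value: the recursion strictly decreases $\vec{t}$, which is bounded below by $0$, so every call tree is finite and evaluates to a value; since every argument that a rule inspects is first reduced to a pointer or a fixed constructor, the same computation is carried out by $\arrrin{\Rules}$, and correctness of the simulation yields $\symb{start}(s_1::\dots::s_n::\symb{nil}) \arrrin{\Rules} \symb{true}$ iff $M$ accepts $s_1\dots s_n$ iff $s_1\dots s_n \in A$. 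The real work -- and the step I expect to be hardest to get exactly right -- is the constructor-free reformulation of the middle two paragraphs: numbers carried as several separate arguments, tuple arithmetic exposed component-wise, the head located by scanning rather than by computing and dereferencing an index, and the input threaded around for predecessor and the zero test; each point is routine in isolation, but together they are the substance of the proof.
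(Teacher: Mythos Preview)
Your proposal is correct and is precisely the approach the paper has in mind: the paper's proof is a one-line deferral to Jones' method in~\cite{jon:01}, and what you have written is exactly that method---polynomially clocked TM, counters as tuples of input suffixes carried as separate arguments, configuration recovered pointwise by $\symb{state}/\symb{sym}/\symb{headhere}$ recursion on time---adapted to the term-rewriting setting and checked against the cons-free, deterministic, and call-by-value requirements. You have simply filled in the details the paper leaves implicit.
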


\begin{proof}
It is not hard to adapt the method of~\cite{jon:01} which, given a
fixed deterministic Turing Machine operating in polynomial time,
specifies a cons-free TRS simulating the machine.
\end{proof}

To see that cons-free call-by-value term rewriting
\emph{characterises} PTIME, it merely remains to be seen that every
decision problem that is accepted by a call-by-value cons-free TRS
can be solved by a deterministic Turing Machine -- or,
equivalently, an algorithm in pseudo code -- running in polynomial
time.  
We consider the following algorithm.

\begin{algorithm}\label{alg:cbv}
For a given starting term $s$, let $\B := \B_s$.
For all $f \in \F$ of arity $n$ and for all $s_1,\dots,s_n,t \in
\B$, let $\Conf^i[f(\vec{s}) \approx t] = \symb{NO}$.

Now, for $i \in \N$ and $f$ of arity $n$ in $\D,s_1,\dots,s_n,t \in
\B$:
\begin{itemize}
\item if $\Conf^i[f(\vec{s}) \approx t] = \symb{YES}$, then
  $\Conf^{i+1}[f(\vec{s}) \approx t] := \symb{YES}$;
\item if there is some rule $\ell \to r \in \Rules$ matching
  $f(\vec{s})$ and a substitution $\gamma$ such that $f(\vec{s}) =
  \ell\gamma$, and if $t \in \NF_i(r\gamma)$, then
  $\Conf^{i+1}[f(\vec{s}) \approx t] := \symb{YES}$:
\item if neither of the above hold, then $\Conf^{i+1}[f(\vec{s})
  \approx t] := \symb{NO}$.
\end{itemize}
Here, $\NF_i(s)$ is defined recursively for $\B$-safe terms $s$ by:
\begin{itemize}
\item if $s$ is a data term, then $\NF_i(s) = \{s\}$;
\item if $s = f(s_1,\dots,s_n)$, then let $\NF_i(s) =$\\$
  \bigcup \{ u
  \in \B \mid \exists t_1 \in \NF_i(s_1),\dots,t_n \in \NF_i(s_n).
  \Conf^i[f(t_1,\dots,t_n) \approx u] = \symb{YES} \}$.
\end{itemize}
We stop the algorithm at the first index $I > 0$ where for all
$f \in \F$ and $\vec{s},t \in \B$:
$\Conf^I[f(\vec{s}) \approx t] = \Conf^{I-1}[f(\vec{s}) \approx
t]$.
\end{algorithm}

As $\D$ and $\B$ are both finite, and the number of positions at which
$\Conf^i$ is $\symb{YES}$ increases in every step, this process always
ends.  What is more, it ends (relatively) fast:

\begin{lemma}
Algorithm~\ref{alg:cbv} operates in $O(n^{3k+3})$ steps, where $n$ is
the size of the input term $s$ and $k$ the greatest arity in $\D$
(assuming the size and contents of $\Rules$ and $\F$ constant).
\end{lemma}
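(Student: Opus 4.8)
The plan is to bound the running time as (number of iterations of the main loop) $\times$ (cost of one iteration), the latter being dominated by the calls to $\NF_i$. First I record the size bounds and count the iterations. Since $\Rules$ and $\F$ are fixed, $|\F| = O(1)$, and $\B = \B_s$ is closed under subterms with $|\B| = O(n)$, every element of $\B$ being a subterm of $s$ or of a fixed right-hand side and hence of size $O(n)$. The update clauses only ever modify entries $\Conf^i[f(\vec s) \approx t]$ with $f \in \D$, and $\D$-symbols have arity $\le k$, so there are $O(n^{k+1})$ relevant table entries. The family $\Conf^i$ is monotone in $i$: a $\symb{YES}$ stays $\symb{YES}$ by the first clause, and since $\NF_i(u)$ grows monotonically with $\Conf^i$, so does the rule-based clause; hence the set of $\symb{YES}$ positions is non-decreasing. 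Therefore, whenever the algorithm has not yet stopped, the step $i \to i+1$ must turn at least one new entry to $\symb{YES}$, so the stopping index satisfies $I = O(n^{k+1})$.

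Next I analyse the cost of a single call $\NF_i(r\gamma)$. Fix an entry $[f(\vec s) \approx t]$ and a rule $\ell \to r$ with $f(\vec s) = \ell\gamma$; finding such a rule and $\gamma$ costs $O(1)$, since $\ell$ is a fixed linear term matched against data terms $\vec s$. The term $r\gamma$ has size $O(n)$, because $r$ has constant size and $\gamma$ replaces its constantly many variables by data subterms of the $s_i$, each of size $O(n)$. Crucially, cons-freeness of $\Rules$ ensures that every constructor-rooted subterm of $r$ is either a data term or a subterm of $\ell$, hence contains no defined symbol; so $\gamma$ introduces no defined symbols, and the defined-symbol positions of $r\gamma$ are exactly those of $r$, of which there are $O(1)$. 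Now $r\gamma$ is $\B$-safe (by Lemma~\ref{lem:Bsafeprop}, or directly from the shape of the rule), and $\NF_i$ returns a singleton at once on a data term while recursing only on the arguments of a defined symbol; since every non-data $\B$-safe term has a defined symbol at its root, the recursion tree of $\NF_i(r\gamma)$ has only $O(1)$ nodes. At a node $g(u_1,\dots,u_m)$ (so $g \in \D$, $m \le k$) we enumerate the tuples in $\NF_i(u_1) \times \cdots \times \NF_i(u_m)$ — at most $|\B|^m = O(n^k)$ of them, as each $\NF_i(u_j) \subseteq \B$ — and for each tuple $(v_1,\dots,v_m)$ we read $\Conf^i[g(\vec v) \approx w]$ for every $w \in \B$, i.e.\ $O(n)$ constant-time lookups. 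So each node costs $O(n^{k+1})$ and the whole call costs $O(n^{k+1})$; testing $t \in \NF_i(r\gamma)$ adds nothing asymptotically.

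Putting the pieces together, one iteration does, for each of the $O(n^{k+1})$ relevant entries and each of the $O(1)$ rules, one $\NF_i$ computation of cost $O(n^{k+1})$, hence $O(n^{2k+2})$ per iteration, and $O(n^{k+1}) \cdot O(n^{2k+2}) = O(n^{3k+3})$ overall. I expect the main obstacle to be precisely the analysis of $\NF_i$: one must invoke cons-freeness to see that the recursion never descends into the (possibly large) data terms substituted by $\gamma$, so that its recursion tree stays of constant size, and one must note that the branching at each node is only a product of $\le k$ subsets of $\B$ together with a choice of output in $\B$. The remaining estimates — sizes of $\B$, $\F$, $\Rules$, and the monotonicity bound on the iteration count — are routine.
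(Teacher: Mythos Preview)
The paper, being an extended abstract, states this lemma without proof, so there is nothing to compare against directly; your argument is correct and is precisely the analysis one would expect. You correctly decompose the cost as (number of iterations) $\times$ (entries per iteration) $\times$ (cost of an $\NF_i$ call), obtain $O(n^{k+1})$ for each of the first two factors via the table size and monotonicity, and---most importantly---use cons-freeness to see that constructor-rooted subterms of $r$ lie in $\Terms(\C,\V)$, so that the recursion tree of $\NF_i(r\gamma)$ has only $O(1)$ internal nodes, each costing $O(n^{k+1})$.

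One tiny remark: the observation that $r\gamma$ has size $O(n)$ is true but not actually used anywhere in your bound; the argument goes through because the recursion never enters the large data subterms, not because $r\gamma$ is small. You might drop that sentence to keep the logic tighter.
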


Moreover, it provides a decision procedure, calculating \emph{all}
normal forms at once:

\begin{lemma}
For $f \in \D$ of arity $n$ and $s_1,\dots,s_n,t \in \B$:
$\Conf^I[f(s_1,\dots,s_n) \approx t] = \symb{YES}$ if and only if
$f(s_1,\dots,s_n) \arrrin{\Rules} t$.
\end{lemma}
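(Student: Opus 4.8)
The plan is to prove the two implications separately; in both directions we route through the auxiliary sets $\NF_i$ and we lean on the fact that $\B$-safety is preserved along reductions (Lemma~\ref{lem:Bsafeprop}), together with two elementary facts: call-by-value reduction is closed under contexts, and a data term is a normal form that lies in $\B$ whenever it is $\B$-safe.

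\emph{Soundness} ($\Conf^I[f(\vec{s})\approx t]=\symb{YES}\Rightarrow f(\vec{s})\arrrin{\Rules}t$). I would prove, by induction on $i$, the conjunction of (a) for all $f\in\D$ and $\vec{s},t\in\B$, if $\Conf^i[f(\vec{s})\approx t]=\symb{YES}$ then $f(\vec{s})\arrrin{\Rules}t$, and (b) for every $\B$-safe term $s$ and every $u\in\NF_i(s)$, $s\arrrin{\Rules}u$. The base case $i=0$ is immediate, as $\Conf^0$ is everywhere $\symb{NO}$. For the step, first derive (b) at $i$ from (a) at $i$ by structural induction on the $\B$-safe term $s$: if $s$ is a data term it is trivial; if $s=g(s_1,\dots,s_m)$ then each $s_j$ is $\B$-safe, and $u\in\NF_i(s)$ yields $t_j\in\NF_i(s_j)\subseteq\B$ (so each $t_j$ is a data term) with $\Conf^i[g(t_1,\dots,t_m)\approx u]=\symb{YES}$; the structural hypothesis gives $s_j\arrrin{\Rules}t_j$, and since call-by-value reduction is closed under contexts and the $t_j$ are data these chain to $g(s_1,\dots,s_m)\arrrin{\Rules}g(t_1,\dots,t_m)$, after which (a) at $i$ gives $g(t_1,\dots,t_m)\arrrin{\Rules}u$. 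Then (a) at $i{+}1$: if $\Conf^{i+1}[f(\vec{s})\approx t]=\symb{YES}$ holds via the copy clause, use (a) at $i$; otherwise there is a rule $\ell\to r$ and a substitution $\gamma$ with $f(\vec{s})=\ell\gamma$ and $t\in\NF_i(r\gamma)$, and since $f(\vec{s})$ is $\B$-safe with data arguments and $\ell\gamma\arrin{\Rules}r\gamma$ is a legal call-by-value step, Lemma~\ref{lem:Bsafeprop} makes $r\gamma$ $\B$-safe, so (b) at $i$ gives $r\gamma\arrrin{\Rules}t$, hence $f(\vec{s})\arrrin{\Rules}t$.

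\emph{Completeness} ($f(\vec{s})\arrrin{\Rules}t\Rightarrow\Conf^I[f(\vec{s})\approx t]=\symb{YES}$). I first record two routine monotonicity facts: the sequence $(\Conf^i)_i$ is pointwise non-decreasing and constant from index $I-1$ onwards, so it suffices to exhibit \emph{any} $i$ with $\Conf^i[f(\vec{s})\approx t]=\symb{YES}$; and consequently $\NF_i(s)\subseteq\NF_j(s)$ for $i\le j$ and $\B$-safe $s$, by structural induction on $s$. The core is the claim, proved by induction on the length $L$ of the reduction: \textit{for every $\B$-safe term $s$ and data term $t$, if $s\arrrin{\Rules}t$ then $t\in\NF_i(s)$ for some $i$.} If $s$ is a data term then $t=s\in\NF_0(s)$. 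Otherwise $s=g(s_1,\dots,s_m)$ with $g\in\D$ and each $s_j$ $\B$-safe; since $t$ has a constructor root, $L\ge 1$ and some step is at the root — let the first such be the $(p{+}1)$-th, so all earlier steps lie strictly inside arguments and the term reached just before it is $g(s_1',\dots,s_m')$ with $s_j\arrrin{\Rules}s_j'$ (a call-by-value reduction of length $\le p<L$) and, because that root step is legal call-by-value, every $s_j'$ a data term. That root step uses a rule $\ell\to r$ with $g(s_1',\dots,s_m')=\ell\gamma$, and the remainder $r\gamma\arrrin{\Rules}t$ has length $<L$ with $r\gamma$ $\B$-safe by Lemma~\ref{lem:Bsafeprop}. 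The induction hypothesis gives indices with $s_j'\in\NF_{i_j}(s_j)$ and $t\in\NF_{i_0}(r\gamma)$; setting $M=\max_j i_j$ and using stage-monotonicity of $\NF$, $s_j'\in\NF_M(s_j)$ for all $j$ and $t\in\NF_M(r\gamma)$, so the rule clause of the algorithm sets $\Conf^{M+1}[g(s_1',\dots,s_m')\approx t]=\symb{YES}$, and therefore $t\in\NF_{M+1}(g(s_1,\dots,s_m))=\NF_{M+1}(s)$. To finish, given $f(\vec{s})\arrrin{\Rules}t$ with all $s_j$ data: the only reducible position in $f(\vec{s})$ is the root (its other subterms are data, hence normal forms) and the reduction is nonempty, so it begins $f(\vec{s})=\ell\gamma\arrin{\Rules}r\gamma\arrrin{\Rules}t$ with $r\gamma$ $\B$-safe; the claim gives $t\in\NF_i(r\gamma)$ for some $i$, hence $\Conf^{i+1}[f(\vec{s})\approx t]=\symb{YES}$, and by the first monotonicity fact $\Conf^I[f(\vec{s})\approx t]=\symb{YES}$.

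\emph{Main obstacle.} The delicate step is the completeness claim: one must carve a call-by-value reduction into the initial argument-only reductions, the first root contraction — which is exactly where the call-by-value restriction is used, as it forces all arguments to have become data terms so that the algorithm's matching clause applies — and the tail, while keeping the induction well-founded on reduction length and merging the several stage indices $i_j$ via the (easy but indispensable) stage-monotonicity of $\NF_i$. Everything else is bookkeeping: that all intermediate terms stay $\B$-safe, that call-by-value reduction is stable under contexts and under projection onto an argument subterm, and that data terms are $\B$-safe normal forms in $\B$ — all routine given Lemma~\ref{lem:Bsafeprop} and the definitions.
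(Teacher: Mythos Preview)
The paper is an extended abstract and does not actually prove this lemma; the full argument is deferred to the longer version cited in the preamble. Your proposal is correct and follows what is essentially the canonical route for such a least-fixed-point characterisation: soundness by induction on the stage index $i$ (showing simultaneously that $\Conf^i$ and $\NF_i$ only record genuine call-by-value reductions), and completeness by induction on the length of the call-by-value reduction (showing that every such reduction to a data term is eventually captured in some $\NF_i$, hence in $\Conf^I$ by monotonicity and stabilisation). The key decomposition you isolate for completeness --- split off the argument-only prefix up to the first root step, use the call-by-value restriction to guarantee the arguments are data at that point so the algorithm's matching clause applies, then recurse on the strictly shorter tail $r\gamma \arrrin{\Rules} t$ --- is exactly right, and your use of stage-monotonicity of $\NF_i$ to merge the several indices $i_0,i_1,\dots,i_m$ is the standard and correct way to close the argument.
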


Combining these results, we obtain:

\begin{corollary}\label{cor:cbv}
Cons-free call-by-value term rewriting characterises PTIME.
\end{corollary}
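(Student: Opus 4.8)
The plan is to obtain the corollary by assembling the three preceding results, one for each direction of the ``characterises'' claim. The forward direction --- that every $A$ in PTIME is accepted by some cons-free TRS under call-by-value reduction --- is handed to us directly by Lemma~\ref{lem:decision}, which in fact proves the stronger statement that such an $A$ is \emph{decided}; and a TRS that decides $A$ accepts $A$ by definition. So the real work lies entirely in the converse inclusion.

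For the converse, let $(\F,\Rules)$ be a cons-free TRS accepting a problem $A$ by call-by-value reduction, and fix an input $s = s_1\dots s_n \in \{0,1\}^*$. I would form the starting term $u = \symb{start}(s_1::\dots::s_n::\symb{nil})$, whose size is linear in $n$, put $\B := \B_u$ --- which has linear size, as noted after its definition --- and run Algorithm~\ref{alg:cbv} on $u$. By the complexity lemma the algorithm runs within $O(n^{3k+3})$ steps, where $k$, the greatest arity in $\D$, is a constant because $\Rules$ and $\F$ are fixed; since each step only reads and updates entries of a table indexed by tuples of elements of $\B$, each step is itself polynomially implementable, so the whole run is polynomial and is readily carried out by a deterministic Turing machine. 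By the correctness lemma, at the halting index $I$ we have $\Conf^I[\symb{start}(s_1::\dots::s_n::\symb{nil}) \approx \symb{true}] = \symb{YES}$ iff $\symb{start}(s_1::\dots::s_n::\symb{nil}) \arrrin{\Rules} \symb{true}$, which by the definition of acceptance is equivalent to $s \in A$. Hence the algorithm decides $A$ in polynomial time, so $A$ is in PTIME. Note that no determinism hypothesis on $\Rules$ is needed: the algorithm sidesteps the branching inherent in reduction by computing \emph{all} reachable normal forms simultaneously.

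Since the substantive content is already carried by the three cited lemmas, I do not anticipate a genuine obstacle --- only routine bookkeeping. One should confirm that the encoding $s_1::\dots::s_n::\symb{nil}$ is exactly the input format in the definition of acceptance, and that $\symb{true} \in \B_u$, so that the relevant table entry is in fact maintained: the latter holds because $\symb{true}$ is a data term and, not occurring in $u$, must occur in a right-hand side of $\Rules$ whenever it is reachable from $u$ at all (and if $\symb{true}$ is never reachable, then $A = \emptyset$, which is trivially in PTIME). One should also check that the unit-cost table-lookup model implicit in the complexity lemma really yields polynomial Turing-machine time; this is clear, as each table index is a fixed-length tuple over a set of linear size.
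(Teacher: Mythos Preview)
Your proposal is correct and follows exactly the approach the paper takes: the paper's entire proof is the phrase ``Combining these results, we obtain,'' and you have spelled out precisely that combination---Lemma~\ref{lem:decision} for the forward inclusion, and Algorithm~\ref{alg:cbv} together with the complexity and correctness lemmas for the reverse. Your additional bookkeeping (checking $\symb{true}\in\B_u$, the input encoding, and the polynomial-time realisability of the table model) is sound and simply makes explicit what the paper leaves implicit.
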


\emph{Comment:} although new, this result is admittedly unsurprising,
given the similarity of this result to Jones' work in~\cite{jon:01}.
Although Jones uses a deterministic language, Bonfante~\cite{bon:06}
shows (following an early result in~\cite{coo:73}) that adding a
non-deterministic choice operator to cons-free first-order programs
makes no difference in expressivity.

\section{``Constrained'' Systems}

Towards the main topic in this work, we consider the syntactic
restriction imposed in~\cite{car:sim:14}.

\begin{definition}
For any non-variable term $f(\ell_1,\dots,\ell_n)$, let
$\DV_{f(\ell_1,\dots,\ell_n)}$ consist of those $\ell_i$ which are
variables.  We say a rule $\ell \to r$ is \emph{semi-linear} if each
 $x \in \DV_\ell$ occurs at most once in $r$.
A set of rules $\Rules$ is \emph{constrained} if there exists $\A
\subseteq \D$ such that for all $\ell \to r \in \Rules$:
\begin{itemize}
\item if the root symbol of $\ell$ is an element of $\A$, then 
  $\ell \to r$ is semi-linear;
\item for all $x \in \DV_\ell$ and terms $t$: if $r \suptermeq t
  \supterm x$ then the root symbol of $t$ is in $\A$.
\end{itemize}
\end{definition}

We easily obtain a counterpart of Lemma~\ref{lem:decision}, so to obtain
a characterisation result, it suffices if a ``constrained'' cons-free
TRS cannot handle problems outside PTIME.  This we show by translating
any such system into a cons-free call-by-value TRS, in two steps:
\begin{itemize}
\item First, the ``constrained'' definition is hard to fully oversee.
  We will consider a simple syntactic transformation to an equivalent
  system where all rules are semi-linear.
\item Second, we add rules to the system to let every ground term
  reduce to a data term. 
  Having done this, we can safely impose a call-by-value evaluation
  strategy.
\end{itemize}

\subsection{Semi-linearity}

It is worth noting that, of the two restrictions, the key one is for
rules to be semi-linear.  While it is allowed for some rules not to be
semi-linear, their variable duplication cannot occur in a recursive
way.  In practice, this means that the ability to have symbols $f \in
\D \setminus \A$ and non-semi-linear rules is little more than
syntactic sugar.

To demonstrate this, let us start by a few syntactic changes which
transform a ``constrained'' cons-free TRS into a semi-linear one (that
is, one where all rules are semi-linear).

\begin{definition}
For all $f : n \in \D$, for all indexes $i$ with $1 \leq i \leq n$, we
let $\mathtt{count}(f,i) := \max(\{\mathtt{varcount}(f,i,\rho) \mid
\rho \in \Rules \} \cup \{1\})$, where $\mathtt{varcount}(f,i,
g(\ell_1,\dots,\ell_m) \to r)$ is:
\begin{itemize}
\item $1$ if $f \neq g$ or $\ell_i$ is not a variable;
\item the number of occurrences of $\ell_i$ in $r$ if $f = g$ and
  $\ell_i$ is a variable.
\end{itemize}
Note that, by definition of $\A$, $\mathtt{count}(f,i) = 1$ for all
$i$ if $f \in \A$.
Let the new signature $\G := \C \cup \{ f : \sum_{i = 1}^n
\mathtt{count}(f,i) \mid f : i \in \D \}$ (where $f : k$ indicates
$f$ has arity $k$).
\end{definition}

In order to transform terms to $\Terms(\G,\V)$, we define
$\varphi$:

\begin{definition}
For any term $s$ in $\Terms(\F,\V)$, let $\varphi(s)$ in
$\Terms(\G,\V)$ be inductively defined:
\begin{itemize}
\item if $s$ is a variable, then $\varphi(s) := s$;
\item if $s = c(\dots)$ with $c \in \C$, then $\varphi(s) := s$;
\item if $s = f(s_1,\dots,s_n)$ with $f \in \D$, then
  each $s_i$ is copied $\mathtt{count}(f,i)$ times; that is:
  $\varphi(s) := f(s_1^{(1)},\dots,s_1^{(\mathtt{count}(f,1))},\dots,
  s_n^{(1)},\dots,s_n^{(\mathtt{count}(f,n))})$.
\end{itemize}
\end{definition}

We easily obtain that $\varphi(s)$ respects the arities in $\G$,
provided $s \suptermeq c(\dots)$ with $c \in \C$ implies $c(\dots)
\in \Terms(\C,\V)$ -- which is the case in $\B$-safe terms and
right-hand sides of rules in $\Rules$.
Moreover, $\B$-safe terms over $\F$ are mapped to $\B$-safe terms
over $\G$.

\begin{definition}
We create a new set of rules $\Pules$ containing, for all elements
$f(\ell_1,\dots,\ell_n) \to r \in \Rules$, a rule $f(\ell_1^1,\dots,
\ell_1^{k_1},\dots,\ell_n^1,\dots,\ell_n^{k_n}) \to r''$ where
$k_i := \mathtt{count}(f,i)$ for $1 \leq i \leq n$ and:
\begin{itemize}
\item for all $1 \leq i \leq n$: $\ell_i^1 = \ell_i$, and all other
  $\ell_i^j$ are distinct fresh variables;
\item $r'' := \varphi(r')$, where $r'$ is obtained from $r$ by
  replacing all occurrences of a variable $\ell_i \in \DV_{f(\ell_1,
  \dots,\ell_n)}$ by distinct variables from $\ell_i^1,\dots,
  \ell_i^{k_i}$.
\end{itemize}
\end{definition}

Using the restrictions and the property that each
$\mathtt{count}(f,i) = 1$ if $f \in \A$, we obtain:

\begin{lemma}\label{lem:Pulesgood}
The rules in $\Pules$ are well-defined, cons-free and semi-linear.
\end{lemma}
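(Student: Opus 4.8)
The plan is to prove the three properties one by one; for semi-linearity a single auxiliary observation does most of the work: \emph{if $x$ is a variable and every term $t$ with $s \suptermeq t \supterm x$ has its root symbol in $\A$, then $x$ occurs exactly as often in $\varphi(s)$ as in $s$}. This is shown by induction on $s$. If $s$ is a variable or has a constructor at its root, then $\varphi(s) = s$ and there is nothing to prove. If $s = g(s_1,\dots,s_p)$ with $g \in \D$ and $x$ occurs in $s$, then $g$ is the root of a proper superterm of $x$, so $g \in \A$; hence $\mathtt{count}(g,j) = 1$ for every $j$, so $\varphi(s) = g(\varphi(s_1),\dots,\varphi(s_p))$ with no copying, and the claim follows from the induction hypothesis applied to each $s_j$ (the hypothesis on proper superterms of $x$ being inherited by each $s_j$).

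\emph{Well-definedness.} By definition, $\mathtt{count}(f,i)$ is at least the number of occurrences of $\ell_i$ in $r$ whenever $\ell_i \in \DV_{f(\ell_1,\dots,\ell_n)}$, so there are enough distinct variables among $\ell_i^1,\dots,\ell_i^{k_i}$ to carry out the substitution defining $r'$. Moreover $r'$, being $r$ with certain variable leaves renamed to variables, still has the property that every one of its subterms with a constructor at the root is a constructor term: this holds for $r$ since, by cons-freeness of $\Rules$, such a subterm $t$ of $r$ either lies in $\Terms(\C)$ or is a strict subterm of $\ell$, and any strict subterm of $\ell = f(\ell_1,\dots,\ell_n)$ sits inside some $\ell_i$, which (having the non-variable subterm $t$) is not a variable and hence is a constructor term proper, so $t$ is a constructor term. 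Hence, as already remarked in the text, $r'' = \varphi(r')$ is a well-formed term of $\Terms(\G,\V)$ respecting the arities in $\G$; and since the new left-hand side feeds exactly $\sum_{i} k_i$ constructor-term arguments to $f$, whose arity in $\G$ is $\sum_{i} k_i$, every $\Pules$-rule is well-defined.

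\emph{Cons-freeness.} The new left-hand side is $f$ applied to the $\ell_i^j$, of which $\ell_i^1 = \ell_i$ is a constructor term while all others are fresh and pairwise distinct, so it is linear and of the required shape. For the constructor clause on $r''$: $\varphi$ reproduces constructor-rooted subterms verbatim and creates no new ones, so every subterm $t'' = c(\dots)$ of $r''$ with $c \in \C$ is literally a subterm of $r'$, arising from a subterm $t = c(\dots)$ of $r$ with possibly some variable leaves renamed. By cons-freeness of $\Rules$, either $t \in \Terms(\C)$, and then $t$ has no variables, so $t'' = t \in \Terms(\C)$; or $\ell \supterm t$, and then $t$ sits inside a constructor-rooted $\ell_i$ and, by linearity of $\ell$, contains no variable of $\DV_\ell$, so the renaming leaves $t$ untouched and $t'' = t$ is a strict subterm of the new left-hand side (strict, since $t$ is constructor-rooted while the left-hand side is not). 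Either way the third cons-free clause holds.

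\emph{Semi-linearity.} The $\DV$-set of the new left-hand side consists exactly of those $\ell_i^j$ that are variables, and each such variable occurs at most once in $r'$: a fresh $\ell_i^j$ (with $j \geq 2$) replaces at most one occurrence of $\ell_i$, while $\ell_i^1 = \ell_i$ (when it is a variable) can occur in $r'$ only as one of the distinct replacement variables chosen for the occurrences of $\ell_i$ in $r$. Moreover each such occurrence sits, within $r$ and hence within $r'$ (which has the same non-variable structure), below function symbols whose roots all lie in $\A$; this is precisely the second clause of the definition of constrained applied to $\ell_i \in \DV_\ell$. The auxiliary observation then gives that each of these variables still occurs at most once in $r'' = \varphi(r')$, so every $\Pules$-rule is semi-linear. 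This last step is the only point where constrainedness is genuinely needed, and it is the delicate one: the renaming alone merely undoes the duplication already present in $r$, whereas $\varphi$ would in general reintroduce duplication by copying arguments; the two mechanisms are kept from composing because $\varphi$ copies arguments only beneath symbols outside $\A$ (where $\mathtt{count}$ may exceed $1$), while the variables of $\DV_\ell$ occur only beneath symbols of $\A$.
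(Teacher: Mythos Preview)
Your proof is correct and fleshes out precisely the approach the paper only hints at: the paper merely states that the lemma follows ``using the restrictions and the property that each $\mathtt{count}(f,i) = 1$ if $f \in \A$,'' and your auxiliary observation is exactly the clean way to exploit that property for semi-linearity, while your well-definedness and cons-freeness arguments are the routine verifications the paper omits. There is nothing to contrast, as the paper gives no competing argument.
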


Moreover, these altered rules give roughly the same rewrite relation:

\begin{theorem}
Let $s,t$ be $\B$-safe terms and $u$ a data term.  Then:
\begin{itemize}
\item if $s \arr{\Rules} t$, then $\varphi(s) \arr{\Pules}^+
  \varphi(t)$ (an easy induction on the size of $s$);
\item if $\varphi(s) \arrr{\Pules} u$, then $s \arrr{\Rules} u$ (by
  induction on the length of $\varphi(s) \arrr{\Pules} u$);
\item $s \arrr{\Rules} u$ if and only if $\varphi(s) \arrr{\Pules} u$
  (by combining the first two statements).
\end{itemize}
\end{theorem}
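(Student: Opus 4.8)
The plan is to prove the three bullet points of the theorem in the stated order, since the third is a trivial consequence of the first two. Throughout I would rely on Lemma~\ref{lem:Bsafeprop} (so that every term in a reduction from a $\B$-safe term stays $\B$-safe, hence $\varphi$ is well-defined on all such terms and respects arities in $\G$) and on Lemma~\ref{lem:Pulesgood} (so that $\Pules$ is a legitimate cons-free, semi-linear TRS).

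For the first bullet, I would argue by induction on the size of $s$, with a case analysis on the position of the redex contracted in $s \arr{\Rules} t$. If the redex is strictly inside some argument of the root symbol of $s$, the inductive hypothesis applies to that argument and one lifts the reduction through $\varphi$, using that $\varphi$ acts homomorphically on defined symbols (duplicating arguments) and identically on constructor-headed subterms; here one needs that the duplicated copies $s_i^{(1)},\dots,s_i^{(\mathtt{count}(f,i))}$ all reduce in the same way, which is fine since they are literally equal terms and $\arr{\Pules}$ is closed under contexts. If instead the redex is at the root, so $s = \ell\gamma$ for some $\ell \to r \in \Rules$, then $\varphi(s) = \ell''\gamma''$ where $\ell''$ is the corresponding left-hand side of the new rule in $\Pules$ and $\gamma''$ is the substitution sending each fresh variable $\ell_i^j$ to $\varphi(\gamma(x))$ for the appropriate original variable $x = \ell_i$. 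One checks that $r''\gamma'' = \varphi(r\gamma)$: the replacement of duplicated variables in $r'$ is undone by $\gamma''$ mapping them all back to (the $\varphi$-image of) the same term, and $\varphi$ commutes with the substitution on the non-duplicated part because $\gamma$ only substitutes constructor terms (left-hand sides are constructor terms), which $\varphi$ leaves untouched. This yields a single root step $\varphi(s) \arr{\Pules} \varphi(t)$, so in either case $\varphi(s) \arr{\Pules}^+ \varphi(t)$.

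For the second bullet I would induct on the length of the reduction $\varphi(s) \arrr{\Pules} u$. If the length is zero then $\varphi(s) = u$ is a data term; since $\varphi$ is the identity on constructor-headed terms and is not constructor-headed when $s$ has a defined root, $s$ itself must already equal $u$, so $s \arrr{\Rules} u$ trivially. Otherwise write $\varphi(s) \arr{\Pules} v \arrr{\Pules} u$. The subtlety is that $v$ need not be of the form $\varphi(s')$ for any $\F$-term $s'$, because the first step might contract a redex inside one copy of a duplicated argument without touching the sibling copies, destroying the ``all copies equal'' invariant. I would handle this by strengthening the induction hypothesis: instead of $\varphi(s)$, allow any $\Pules$-term $w$ that ``collapses'' to some $\B$-safe $\F$-term $s$ in the sense that erasing the duplication (identifying the copy-slots) recovers $s$ and the various copies of each argument are all $\arr{\Rules}$-joinable with the corresponding $\F$-subterm. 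Concretely, define a relation between $\Pules$-terms and $\F$-terms capturing ``$w$ is $\varphi$ of $s$ up to replacing some duplicated-argument copies by terms that $\F$-reduce to the same data,'' and show it is preserved by single $\Pules$-steps while tracking a corresponding $\arrr{\Rules}$ reduction on the $\F$-side. A root $\Pules$-step corresponds to a root $\Rules$-step (after first $\arr{\Rules}$-joining the copies of each variable-argument to a common reduct, which is possible because cons-free reduction to data is confluent enough on $\B$-safe terms for this purpose — or, more cheaply, because the rule only needs one representative and semi-linearity guarantees no conflicts); an internal $\Pules$-step is absorbed into the invariant or mirrored by an internal $\Rules$-step. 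When the reduction finally reaches the data term $u$, all copies have collapsed and we read off $s \arrr{\Rules} u$.

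The third bullet is then immediate: if $s \arrr{\Rules} u$, apply the first bullet along each step of the reduction (each intermediate term is $\B$-safe by Lemma~\ref{lem:Bsafeprop}) to get $\varphi(s) \arrr{\Pules} \varphi(u) = u$; conversely the second bullet gives the reverse implication directly. The main obstacle is clearly the second bullet, and specifically finding the right invariant on $\Pules$-terms that is both preserved by arbitrary single steps and strong enough to project back onto an $\Rules$-reduction — the naive ``image of $\varphi$'' is not step-closed, and the honest fix is to build the duplication-collapsing simulation relation sketched above and verify it is a (weak) simulation of $\arr{\Pules}$ by $\arrr{\Rules}$. Everything else is a routine, if slightly tedious, structural induction.
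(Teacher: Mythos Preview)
Your proposal follows the paper's own outline precisely: the parenthetical hints in the theorem statement \emph{are} the paper's proof (this is an extended abstract), and you use exactly the announced inductions---on the size of $s$ for the first item, on the length of the $\Pules$-reduction for the second, and the obvious combination for the third.

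Your treatment of the second item goes beyond what the paper spells out, and rightly so: you correctly observe that a single $\Pules$-step from $\varphi(s)$ need not land in the image of $\varphi$ (one copy of a duplicated argument may be reduced while its siblings are not), so the naive one-step induction hypothesis is too weak. Your fix---replacing ``is $\varphi(s)$'' by a simulation/collapsing relation that tolerates desynchronised copies and is preserved by single $\Pules$-steps---is a standard and sound way to make the length induction go through. The paper's terse hint hides exactly this difficulty, and your elaboration is the natural way to discharge it; nothing in the extended abstract suggests a materially different route.
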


To avoid a need to alter the input, we may add further (semi-linear!)
rules such as $\symb{start'}([]) \to \varphi(\symb{start}([])),\ 
\symb{start'}(x::y) \to \varphi(\symb{start}(x::y))$.  We obtain the
corollary that constrained cons-free rewriting characterises
PTIME iff semi-linear cons-free rewriting does.

\subsection{Call-by-value Reduction}\label{subsec:callby}

Now, to draw the connection with Corollary~\ref{cor:cbv}, we cannot
simply impose a call-by-value strategy and expect to obtain the same 
normal forms; an immediate counterexample is the TRS with rules
$\symb{a} \to \symb{a}$ and $\symb{f}(x) \to \symb{b}$: we have
$\symb{f}(\symb{a}) \arrr{\Rules} \symb{b}$, but this normal form is
never reached using call-by-value rewriting.

Thus, we will use another simple syntactic adaptation:

\begin{definition}
We let $\H := \G \cup \{\bot\}$, and let $\Qules := \Pules \cup \{
f(x_1,\dots,x_n) \to \bot \mid f : n \in \D \}$.  We also include
$\bot$ in $\B$.
\end{definition}

After this modification, every ground term reduces to a data term,
which allows a call-by-value strategy to work optimally.  Otherwise,
the extra rules have little effect:

\begin{lemma}\label{lem:botokay}
Let $s$ be a $\B$-safe term in $\Terms(\G)$ and $\bot \neq t \in
\Terms(\C)$.
Then $s \arrr{\Pules} t$ iff $s \arrr{\Qules} t$.
\end{lemma}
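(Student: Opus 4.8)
The plan is to prove Lemma~\ref{lem:botokay} by establishing the two directions separately, the left-to-right inclusion being essentially trivial and the right-to-left inclusion requiring the real work.

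First, for the direction $s \arrr{\Pules} t \implies s \arrr{\Qules} t$: since $\Pules \subseteq \Qules$, every $\Pules$-step is a $\Qules$-step, so this is immediate and needs no hypothesis beyond what is stated.

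For the converse, suppose $s \arrr{\Qules} t$ where $s$ is $\B$-safe over $\Terms(\G)$ and $t$ is a non-$\bot$ data term. The idea is to show that any use of one of the new rules $f(x_1,\dots,x_n) \to \bot$ is ``useless'' in a reduction that ends at a $\bot$-free data term, and hence can be eliminated, leaving a pure $\Pules$-reduction. Concretely, I would argue by induction on the length of the $\Qules$-reduction. The key structural observation is that once a $\bot$ is introduced at some position $p$ in a term, it can never be removed: $\bot$ is a constructor-like constant that matches no left-hand side of $\Pules$ (those are cons-free constructor-system left-hand sides over $\G$, not mentioning $\bot$) and matches no left-hand side of the new rules either (those require a defined symbol $f \in \D$ at the root, not $\bot$), and since all left-hand sides are constructor patterns with a defined root, a subterm $\bot$ sitting strictly inside a redex pattern would have to be matched against a constructor-term position of some $\ell_i$, which is impossible because $\ell_i$ is a constructor term over $\G$ and $\bot \notin \G$. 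Therefore $\bot$ is a ``permanent'' subterm once created: if any reduct in the sequence contains $\bot$, so do all later reducts, contradicting that $t \in \Terms(\C)$ with $\bot \neq t$, i.e. $t$ is $\bot$-free (here I use that $\Terms(\C)$ does not include $\bot$, and that $t$ as a data term has no defined symbols either). Hence no term in the reduction contains $\bot$ at all, which means in particular that no step in the reduction was an application of a new rule $f(\vec{x}) \to \bot$ (such a step would put a $\bot$ into the result). So every step is a $\Pules$-step, giving $s \arrr{\Pules} t$.

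The main obstacle — really the only non-routine point — is the claim that $\bot$ can never be erased once it appears, which hinges on the cons-free constructor-system shape of the rules: left-hand sides are of the form $g(\ell_1,\dots,\ell_m)$ with all $\ell_i$ constructor terms over $\G$, so no proper subterm of a redex can be a defined symbol or the fresh constant $\bot$ except where a pattern variable sits, and a pattern variable matching $\bot$ still carries the $\bot$ into $r\gamma$ unless the variable is non-duplicating-and-discarded; but since left-hand sides are linear and we only need \emph{some} occurrence to persist, and more carefully a variable bound to a term containing $\bot$ either does not occur in $r$ (then the whole subterm is discarded, but then $\bot$ was inside a strict subterm of the redex, contradiction with the redex being a $\Pules$- or new-rule redex whose argument positions are constructor patterns that cannot contain a defined-or-$\bot$ subterm at a non-variable position) — I should phrase this as: in a $\B$-safe term over $\Terms(\H)$, a $\bot$ occurs only at positions that are either the whole term-argument of a defined symbol (filling a pattern-variable slot) or nested inside another $\bot$-free-failing context; a one-line way to close it is to define the set of $\bot$-free $\B$-safe terms and check by case analysis on the rule applied that it is \emph{not} closed backwards under $\arr{\Qules}$ in the bad direction, i.e. that $\bot$-freeness is preserved neither forwards nor can be regained. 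I would present it as: \emph{Claim.} If $u \arr{\Qules} v$ and $u$ contains $\bot$, then $v$ contains $\bot$; proof by noting the contracted redex has a defined root, its arguments are matched against constructor patterns which cannot cover a $\bot$-occurrence, so any $\bot$ in $u$ lies either in the substitution part outside the redex or in a discarded/duplicated variable binding, and a discarded binding is impossible here because that $\bot$ would be at a non-variable constructor-pattern position — hence every $\bot$ of $u$ reappears in $v$. From the Claim the lemma follows as above.
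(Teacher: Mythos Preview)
Your overall strategy for the backward direction --- argue that a $\Qules$-reduction from a $\bot$-free start to a $\bot$-free end never really needs the new $\bot$-rules --- is the right shape, but the central claim you rest everything on is false. You assert that once $\bot$ appears in a term it can never disappear, i.e.\ if $u \arr{\Qules} v$ and $u$ contains $\bot$ then so does $v$. This fails because cons-free (and semi-linear) rules are allowed to \emph{erase} variables. For a concrete counterexample, let $\Pules$ contain the rule $g(x) \to b$ with $b$ a nullary constructor; this rule is cons-free and semi-linear, and $g(\bot) \arr{\Pules} b$ with $b$ $\bot$-free. Your justification that ``a discarded binding is impossible here because that $\bot$ would be at a non-variable constructor-pattern position'' is exactly where the reasoning breaks: the $\bot$ sits precisely at the \emph{variable} position of $x$ in the pattern $g(x)$, not at a non-variable position, so nothing prevents it from being discarded when $x$ does not occur in the right-hand side.

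Consequently a $\Qules$-reduction $s \arrr{\Qules} t$ may well pass through terms containing $\bot$ even when $s$ and $t$ are $\bot$-free, and you cannot conclude that every step was already a $\Pules$-step. What is needed instead is a simulation argument showing that each $\bot$-step can be \emph{omitted}, not that it cannot occur. One clean way: define $v \succeq u$ to mean that $u$ is obtained from $v$ by replacing some subterms by $\bot$, and prove that whenever $u \arr{\Qules} u'$ and $v \succeq u$ with $v \in \Terms(\G)$, there exists $v' \in \Terms(\G)$ with $v \arrr{\Pules} v'$ in zero or one steps and $v' \succeq u'$. A $\bot$-step on $u$ is simulated by zero steps on $v$; a $\Pules$-step lifts along $\succeq$ using left-linearity (any $\bot$ in the matched redex lies under a pattern variable, so the same rule matches $v|_p$). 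Starting with $v = s \succeq s$ and ending at some $t' \succeq t$, the fact that $t$ is $\bot$-free (which does follow from $\B$-safety and $t \neq \bot$, as you observed) forces $t' = t$, yielding $s \arrr{\Pules} t$.
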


On this TRS, we may safely impose call-by-value strategy.

\begin{lemma}\label{lem:toinnermost}
Let $s$ be a $\B$-safe term and $t$ a data term such that $s \arrr{
\Pules} t$.  Then $s \arrrin{\Qules} t$.
\end{lemma}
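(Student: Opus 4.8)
The plan is to argue by induction on the length of the given reduction $s \arrr{\Pules} t$, pushing the work inward and using the new collapsing rules $f(x_1,\dots,x_n) \to \bot$ of $\Qules$ to supply data terms for the arguments that the reduction never really ``uses''. As a preliminary I would record the auxiliary fact that \emph{every $\B$-safe ground term $a$ satisfies $a \arrrin{\Qules} b$ for some data term $b$}: this is a routine induction on the size of $a$ (if $a$ is a data term we are done; otherwise $a = g(a_1,\dots,a_m)$ with $g \in \D$, each $a_j \arrrin{\Qules} b_j$ with $b_j$ data by the induction hypothesis, and then $a \arrrin{\Qules} g(b_1,\dots,b_m) \arrin{\Qules} \bot$, a legal call-by-value step since all $b_j$ are data and $\bot$ is a data term). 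I would also use that $\Qules$ is again cons-free, so that Lemma~\ref{lem:Bsafeprop} applies to it and $\B$-safety is preserved along $\Qules$-reductions, and that $\Pules$ is cons-free and \emph{semi-linear} by Lemma~\ref{lem:Pulesgood}.

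For the induction itself, the base case (empty reduction) is immediate. If the reduction is nonempty then $s$ is not a data term (data terms are $\Pules$-normal forms, $\Pules$ being a constructor system), so $s = f(s_1,\dots,s_n)$ with $f \in \D$ and each $s_i$ $\B$-safe, and since $t$ is a data term the reduction must contain a root step; taking the first one splits it as $f(s_1,\dots,s_n) \arrr{\Pules} f(u_1,\dots,u_n) = \ell\gamma \arr{\Pules} r\gamma \arrr{\Pules} t$ for some $\ell \to r \in \Pules$, where $s_i \arrr{\Pules} u_i$ for each $i$ and all these sub-reductions, as well as $r\gamma \arrr{\Pules} t$, are strictly shorter than the whole. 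Writing $\ell = f(\ell_1,\dots,\ell_n)$, I would build data terms $v_1,\dots,v_n$ and a matching substitution $\delta$ with $\ell\delta = f(v_1,\dots,v_n)$ such that (i) $s_i \arrrin{\Qules} v_i$ for every $i$ and (ii) $r\delta \arrr{\Pules} t$ by a reduction no longer than $r\gamma \arrr{\Pules} t$ (and $r\delta$ still $\B$-safe). Given this, $f(s_1,\dots,s_n) \arrrin{\Qules} f(v_1,\dots,v_n) = \ell\delta \arrin{\Qules} r\delta \arrrin{\Qules} t$: the first reduction is assembled from (i) by reducing arguments one at a time (a call-by-value step stays a call-by-value step under a context whose other arguments are left untouched), the middle step is call-by-value because all $v_i$ are data terms, and the last reduction is supplied by the induction hypothesis applied to (ii).

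The construction of $v_i$ and $\delta$ splits into cases. If $\ell_i$ is not a variable, then $u_i = \ell_i\gamma$ is constructor-headed and $\B$-safe, hence a data term; I set $v_i := u_i$, obtaining $s_i \arrrin{\Qules} v_i$ from the induction hypothesis (shorter reduction), and I let $\delta$ agree with $\gamma$ on the variables occurring inside such $\ell_i$. If $\ell_i$ is a variable that does not occur in $r$, its value is irrelevant to $r\gamma$, so I take $v_i$ to be any data term with $s_i \arrrin{\Qules} v_i$ (auxiliary fact). The delicate case is when $\ell_i$ is a variable occurring in $r$: by semi-linearity of $\Pules$ it occurs \emph{exactly once}, at some position $p$, so $r\gamma$ contains the single occurrence $u_i = r\gamma|_p$. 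I would then follow the descendants of that occurrence along $r\gamma \arrr{\Pules} t$; as long as such a descendant is not yet a data term it is defined-headed, hence it can only ever sit inside a \emph{variable} argument of a redex contracted above it (sitting below a constructor of a left-hand side would force it to be a data term, by $\B$-safety), so by semi-linearity of that rule it is never duplicated. Consequently either every descendant is consumed before becoming a data term — then $u_i$'s value is immaterial and I again take $v_i$ arbitrary via the auxiliary fact — or a descendant survives into $t$, where it must be a data term $w_i$ reached from $u_i$ by the steps living at or below $p$, so $s_i \arrr{\Pules} u_i \arrr{\Pules} w_i$ is strictly shorter than the original reduction and the induction hypothesis gives $s_i \arrrin{\Qules} w_i =: v_i$. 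In all cases $r\delta$ is $r\gamma$ with $u_i$ replaced by $v_i$ at $p$, and replaying $r\gamma \arrr{\Pules} t$ while deleting the (now superfluous) steps working inside descendants of $p$ yields $r\delta \arrr{\Pules} t$ with no more steps, giving (ii); $\B$-safety of $r\delta$ follows since $\delta$ sends every variable to a $\B$-safe term and the constructor-headed subterms of $r$ that are subterms of $\ell$ only involve variables from non-variable argument patterns, on which $\delta$ equals $\gamma$.

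The main obstacle is exactly this last case: making the descendant-tracking precise and pinning down the dichotomy ``consumed versus survives as a data term'', and in particular seeing clearly that it is semi-linearity of $\Pules$ that excludes the bad scenario in which a single occurrence of a variable argument is duplicated in $r$ and its copies are reduced (nondeterministically) to different data terms. Everything else — the layered shape of $\B$-safe terms, preservation of $\B$-safety under $\Qules$-reduction, stability of call-by-value steps under contexts, and the arithmetic that all invoked reductions are strictly shorter so that the induction is well founded — is routine.
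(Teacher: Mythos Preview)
Your proposal is correct and follows essentially the same idea as the paper's (very brief) sketch: trace the descendant of each non-data argument through the reduction, using semi-linearity of $\Pules$ to guarantee it is never duplicated, so that its internal reductions can be pulled forward into a call-by-value prefix (with the $\bot$-rules of $\Qules$ supplying a data value when the descendant is erased). Your write-up is simply a much more detailed and carefully structured elaboration---via induction on reduction length and a split at the first root step---of what the paper compresses into one paragraph; the only minor imprecision is that your dichotomy ``consumed before becoming data'' versus ``survives into $t$'' should also cover the case where the descendant becomes a data term midway and is then erased or matched, but that case is handled by exactly the same argument as your ``survives'' case.
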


\begin{proof}
The core idea is to trace descendants: if $C[u] \arrr{\Pules} q$ by
reductions in $C$ and $u$ is not data, then because of semi-linearity,
$q$ has at most one copy of $u$: say $q = C'[u]$ with $C[]
\arrr{\Pules} C'[]$.  Any subsequent reduction in $u$ might as
well be done immediately in $C[u]$.
\end{proof}

Binding Lemmas~\ref{lem:botokay} and~\ref{lem:toinnermost} together,
we obtain:

\begin{corollary}\label{cor:callby}
For every $\B$-safe term $s \in \Terms(\G)$ and
data term $t \neq \bot$: $s \arrr{\Pules} t$ iff $s \arrrin{\Qules} t$.
\end{corollary}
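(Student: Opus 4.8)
The plan is to derive the corollary by chaining the two preceding lemmas, once for each direction, together with a small amount of bookkeeping about $\B$-safety.

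For the left-to-right implication, suppose $s \arrr{\Pules} t$ with $s$ a $\B$-safe term in $\Terms(\G)$ and $t$ a data term (note that the hypothesis $t \neq \bot$ is not even used here). This is precisely the hypothesis of Lemma~\ref{lem:toinnermost}, which immediately yields $s \arrrin{\Qules} t$; so this direction is a one-line appeal.

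For the right-to-left implication, suppose $s \arrrin{\Qules} t$. Since call-by-value reduction is a restriction of ordinary reduction, $\arrrin{\Qules} \subseteq \arrr{\Qules}$, hence $s \arrr{\Qules} t$. To apply Lemma~\ref{lem:botokay} I must check that $t \in \Terms(\C)$, i.e.\ that $t$ does not mention $\bot$. For this I would first observe that $\Qules$ is still cons-free: the added rules $f(x_1,\dots,x_n) \to \bot$ are left-linear, constructor-headed, and have a data term as right-hand side. Consequently the analogue of Lemma~\ref{lem:Bsafeprop} applies and $t$ is $\B$-safe (w.r.t.\ $\B = \B_s \cup \{\bot\}$). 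A $\B$-safe term whose root symbol is a constructor cannot have been built by the recursive clause, so it lies in $\B$; since $\B_s \subseteq \Terms(\C)$, the only $\B$-safe data term mentioning $\bot$ is $\bot$ itself, which is excluded. Thus $t \in \Terms(\C)$ with $t \neq \bot$, and Lemma~\ref{lem:botokay} gives $s \arrr{\Pules} t$, completing the argument.

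The only genuine subtlety — and hence the step I would treat with care — is this last matching of hypotheses: Lemma~\ref{lem:botokay} is stated for $t \in \Terms(\C)$, whereas the corollary speaks of a data term over $\H$, so one must rule out an occurrence of $\bot$ buried inside $t$. This is exactly what the $\B$-safety-preservation observation delivers; everything else is immediate from the two cited lemmas.
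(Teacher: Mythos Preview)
Your proposal is correct and follows the same approach as the paper, which simply states that the corollary is obtained by ``binding Lemmas~\ref{lem:botokay} and~\ref{lem:toinnermost} together.'' Your extra care in verifying $t \in \Terms(\C)$ via $\B$-safety preservation for $\Qules$ (noting that the new $\B$ is the old one together with $\bot$, so a $\B$-safe data term distinct from $\bot$ cannot mention $\bot$) is appropriate and merely makes explicit a detail the paper leaves implicit.
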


\section{Conclusion}

Putting the transformations and Algorithm~\ref{alg:cbv} together, we
thus obtain an alternative proof for the result in~\cite{car:sim:14}.
But we have done a bit more than that: we have also seen that
both call-by-value and semi-linear cons-free term rewriting
characterise PTIME.  Moreover, through these
transformations we have demonstrated that, at least in the
first-order setting, there is little advantage to be gained by
considering constrained or semi-linear rewriting over the (arguably
simpler) approach of imposing an evaluation strategy.

Although we have used a call-by-value strategy here for simplicity,
it would not be hard to adapt the results to use the more common (in
rewriting) \emph{innermost} strategy
instead.  An interesting future work would be to test whether the
parallel with Jones' work extends to higher orders, i.e.~whether
innermost $k^{\text{th}}$-order rewriting characterises
EXP$^{k-1}$TIME -- and whether instead using semi-linearity
restrictions does add expressivity in this setting.

\bibliography{references}

\begin{thebibliography}{1}

\bibitem{bon:06}
G.~Bonfante.
\newblock Some programming languages for logspace and ptime.
\newblock In M.~Johnson, editor, {\em AMAST '06}, volume 4019 of {\em LNCS},
  pages 66--80, 2006.

\bibitem{coo:73}
S.A. Cook.
\newblock Characterizations of pushdown machines in terms of time-bounded
  computers.
\newblock {\em ACM}, 18(1):4--18, 1971.

\bibitem{car:sim:14}
D.~de~Carvalho and J.~Simonsen.
\newblock An implicit characterization of the polynomial-time decidable sets by
  cons-free rewriting.
\newblock In G.~Dowek, editor, {\em RTA-TLCA '14}, volume 8560 of {\em LNCS},
  pages 179--193, 2014.

\bibitem{jon:01}
N.~Jones.
\newblock Life without cons.
\newblock {\em JFP}, 11(1):5--94, 2001.

\bibitem{kop:sim:16}
C.~Kop and J.~Simonsen.
\newblock Complexity hierarchies and higher-order cons-free rewriting.
\newblock In D.~Kesner and B.~Pientka, editors, {\em FSCD '16}, volume~52 of
  {\em LIPIcs}, pages 23:1--23:18, 2016.

\end{thebibliography}
\bibliographystyle{plain}

\end{document}